\newcommand{\lf}{\left(}
\newcommand{\rg}{\right)}
\newcommand{\ceil}[1]{\lceil #1 \rceil}
\newcommand{\Ceil}[1]{\left\lceil #1 \right\rceil}
\newcommand{\mn}[1]{\min{\{ #1 \}}}
\newcommand{\rsort}{{\tt  SqRan} }
\newcommand{\dsort}{{\tt  SqDet} }
\newcommand{\mrsort}{{\tt McRan} }
\newcommand{\mdsort}{{\tt McDet} }
\newcommand{\sort}{{\tt  SomeSort} }
\newenvironment{proof}{\noindent {\bf Proof:}}{\ \BBox \\*}
\newcommand{\BBox}{\rule{0.1in}{0.1in}}
\newtheorem{lem}{Lemma}
\newtheorem{theo}{Theorem}
\newtheorem{cla}{Claim}
\title{Using parallelism techniques to improve sequential and multi-core 
sorting performance
      }
\author{Alexandros V. Gerbessiotis \\
        CS Department \\
        New Jersey Institute of Technology \\
        Newark, NJ 07102. \\
        alexg@njit.edu \\
        Tel: (973)-596-3244 
}
\date{{\sc \today}}
\begin{document}
\maketitle

\begin{abstract}
We propose new sequential sorting operations by adapting techniques
and methods used for designing parallel sorting algorithms. 
Although the norm is to  parallelize a sequential algorithm to
improve performance, we adapt a contrarian approach: we employ
parallel computing techniques to speed up sequential sorting.
Our methods can also work for multi-core sorting with minor
adjustments that do not necessarily require full 
parallelization of the original sequential algorithm.  
The proposed approach leads to the development 
of asymptotically efficient deterministic and randomized sorting 
operations whose practical sequential and multi-core performance, 
as witnessed by an experimental study,  matches or surpasses existing 
optimized sorting algorithm implementations.

We utilize parallel sorting techniques  such as deterministic 
regular sampling and random oversampling. We extend the notion 
of deterministic regular sampling into deterministic regular 
oversampling for sequential and multi-core sorting and 
demonstrate its potential.
We then show how these techniques can be used for 
sequential sorting and also lead to better multi-core sorting 
algorithm performance as witnessed by the undertaken
experimental study.
\end{abstract}

\vskip 0.1in
\begin{center}
{\bf Keywords: } Sorting - Multicores - 
Parallel computing - Sampling  - Oversampling
\end{center}


\section{Introduction}
\label{intro}

Comparison-based sequential (or serial) sorting is a computational 
problem for which software libraries provide  generic optimized 
implementations that are stable and/or can sort in-place. 
For the remainder, stable sorting is to mean that equally 
valued keys retain in the output their relative input order, and  
sorting is in-place if the same space is used to store the 
input and the output and all other extra space used is constant 
relative to the number of keys to be sorted.
Examples of such software library-based sorting implementations 
include {\tt qsort} available through the C Standard Library, 
or the {\tt sort} and {\tt stable\_sort} variants available
in the C++ Standard Template Library.

The problem of sorting keys in parallel has also been studied 
extensively. One major requirement in parallel algorithm 
design is the minimization of interprocessor communication
that reduces non-computational overhead and thus speeds up  
parallel running time. In order to shorten communication
several techniques are employed that allow coarse 
(eg. several consecutive keys  at a time) rather than fine-grained 
(eg. few individual keys) communication. 
Coarse-grained communication usually 
takes advantage of locality of reference as well.
In parallel sorting, if one wants to sort $n$ keys 
in parallel using $p$ processors, one obvious way to achieve 
this is to somehow split the $n$ keys into $p$ sequences of 
approximately the same size   and then sort these $p$ sequences 
independently and in parallel using a fast  
sequential algorithm so that the concatenation 
(as opposed to merging) of the $p$ 
sorted sequences would generate the sorted output.

For $p=2$ this essentially becomes  quicksort \cite{Hoare62}. 
Generating $p$ sequences of approximately the same size   is 
not straightforward, whether $p=2$ or $p>2$.
One way to effect a $p$-way splitting as used by \cite{Frazer70} for
external memory sequential sorting is to pick a 
{\em random sample} of  $p-1$ keys and use  them as splitters (i.e.
pivot keys).  This is easy to implement but not very effective 
in making sure that the $p$ resulting sequences effected by 
the $p-1$ splitters are balanced in size \cite{Frazer70}.
Key sampling employed in parallel sorting has been studied
in \cite{HC,Reif87,Reischuk85} that provide algorithms with 
satisfactory and scalable theoretical performance.
Using the technique of {\em random oversampling}, fully developed 
and refined in the context of parallel sorting \cite{HC,Reischuk85,Reif87}, 
one can pick the $p-1$ splitters by using a sample of 
$ps-1$ keys where $s$ is the {\em random oversampling factor}. 
After sorting the sample of size $ps-1$, one can then identify the 
$p-1$ splitters as equidistant keys in the sorted sample. 

One can trace the technique of random oversampling to  original
quicksort \cite{Hoare62}.
Quicksort allows for a variety of choices
for the splitter (pivot) key: first, last, or middle.  
Other choices can be the median of those three keys, 
or in general, the median of $2t+1$ sample keys \cite{Hoare62}, 
for some choice of $t$.
The median of three or $2t+1$ sample keys is an early instance of 
the use of the technique of  oversampling in splitter selection: 
draw a larger sample from the input to pick a better 
(or more reliable) candidate from it as the splitter.

In  \cite{Reif87} it is  shown that the $p$ subsequences of the
$n$ input keys induced by the $p-1$ splitters resulting from random
oversampling will retain with high probability $O(n/p)$ keys each 
and will thus be balanced in size up to multiplicative constant 
factors.
In \cite{Gerbessiotis94}, it is  shown that by fine-tuning and 
carefully regulating the oversampling factor $s$, 
the $p$ sequences will retain with high probability 
$(1+ \epsilon ) n/p$ keys and will thus be more finely balanced. 
Parameter  $0< \epsilon <1$ is a parameter that is controlled 
by $s$.
The bounds on processor imbalance during sorting are tighter than
those of any other random sampling or oversampling algorithm  
\cite{HC,Frazer70,Reif87,Reischuk85}.

Thus if one was to extend say a traditional quick-sort method to use 
oversampling, one would need to add more steps or phases to the 
sorting operation.
These would include a sample selection phase, a sample-sort phase, 
a splitter selection phase, followed by a phase that
splits the input keys around the chosen splitters. 
Then the split keys are to be sorted either recursively by this same 
extended method or by the traditional quick-sort method.

In this work we propose a sample-based randomized sorting operation 
\rsort that does not follow this  traditional pattern of 
sample and splitter selection but instead follows the pattern of 
deterministic regular sampling \cite{SH} and  in particular 
deterministic regular oversampling \cite{GS96}  
(to be described shortly), that are  developed in the context of 
parallel sorting \cite{SH}. 
In such an approach key sorting precedes sample and splitter 
selection similarly to multi-way merging \cite{Knuth73}, 
followed at the end by a multi-way merging that is 
less locality-sensitive.


{\em Deterministic regular sampling} \cite{SH} works as follows.
First split regularly and evenly $n$ input keys into $p$ 
sequences of equal size   (the first $n/p$ or so keys  become 
part of the first sequence, the next $n/p$ keys 
part of the second and so on).
Sort the $p$  sequences independently, and then pick from each 
sequence $p-1$ sample (and equidistant) keys for a total sample 
of size $p(p-1)$. 
A sorting (or multi-way merging) of these sample keys can be 
followed by the selection of $p-1$ equidistant splitters 
from the sorted sample. 
One can then prove that if the $n$ keys are 
split  around  the $p-1$ splitters,
then none of the $p$ resulting  sequences will be of size   more 
than $2n/p$ \cite{SH}.
In \cite{GS96,GS99a,GS97a} the notion of regular sampling is 
extended to include oversampling thus giving rise to 
{\em deterministic regular oversampling}.
In that context choosing a larger sample $p(p-1)s$ deterministically, 
with $s$ being now the {\em regular oversampling factor}, one can 
claim  that each one of the $p$ sequences split by the $p-1$ 
splitters is of size   no more than 
$(1+\delta ) n/p$, where $\delta >0$ also depends on the choice of $s$.

Randomized oversampling-based sorting is supposed to be superior to
deterministic regular sampling or oversampling-based sorting as
the regular oversampling factor $s$ can not be finely-tuned as much
as the  random oversampling factor $s$ \cite{GS96,GS99a}.




In this work we introduce a template  for sequential sorting using as
basis the deterministic sorting algorithm  introduced in 
\cite{GS96,GS99a,GS97a}.
We ``deparallelize'' a parallel
deterministic sorting algorithm that uses regular oversampling
by converting it into a sequential algorithm. Local (to a processor)
simultaneous sorting  can be done by a single processor in turn
using any fast and available sorting implementation which can include 
{\tt qsort, sort, stable\_sort} and we shall call it \sort generically.
The deterministic sorting operation that results 
will be called \dsort depicted in Algorithm~\ref{DSORT}.
If instead of regular oversampling a  random oversampling is  employed, 
but otherwise the steps remained the same,
the same template  could be used for a
randomized sorting operation to be called similarly \rsort and depicted
in Algorithm~\ref{RSORT}.
Therefore the same template  can be be used for \dsort
and \rsort except that in the former case  deterministic regular
oversampling is used \cite{GS96,GS99a,GS97a}, and
in the latter case random oversampling is used in a way that deviates from
the traditional approach of \cite{HC,Reif87,Gerbessiotis94}, where sampling
precedes key sorting.

Since both operations \dsort and \rsort are sequential, 
the choice of $p$ is not controlled by the number of available processors or cores. 
The choice of $p$ will primarily be affected by other characteristics of the 
host architecture such as multiple memory hierarchies (eg. cache memory)
that affect locality of reference. The existence of multiple cores 
can also affect the choice of $p$. 
In our discussion to follow for the
case of multi-core sorting we shall introduce parameter $m$ to be the number 
of available cores. In general, we shall assume that $p \geq m$.

Even though \dsort and \rsort  look similar, random
oversampling is provably theoretically better than 
deterministic regular oversampling. The oversampling parameter 
in \rsort can vary more widely than in \dsort
thus resulting in more balanced work-load during the multi-way 
merging phase that takes into more advantage locality of reference 
issues (eg. cache memory).

In the following section we first introduce \dsort and analyze its performance 
characteristics and then show how one can slightly modify the
sampling phase of it to generate operation  \rsort.
Then we present the multi-core variants \mdsort and \mrsort.
Finally we present some experimental results that are  derived by
implementing the proposed operations  \dsort, \rsort and \mrsort in ANSI C.
The conclusion of the experimental study  is that \dsort, \rsort, \mrsort
coupled with \sort are better than only using \sort.

\section{The \dsort and \rsort sorting operations}

We describe operations \dsort and \rsort that utilize for sorting \sort. 
Then we proceed to modifying those two operations
in a very simple manner to work for multi-core processors. The resulting
operations are \mdsort and \mrsort respectively.

\subsection{Operation \dsort}

The proposed operation \dsort is depicted in Algorithm~\ref{DSORT}
and is based on a non-iterative variant of the 
bulk-synchronous \cite{Valiant90,Valiant90b}
parallel sorting algorithm of \cite{GS96,GS99a}. It is
deterministic regular sampling based \cite{SH} but also extends 
regular sampling to deterministic regular oversampling and thus utilizes 
an efficient partitioning scheme that splits -- almost evenly and 
independently of the input distribution -- an arbitrary number of 
sorted sequences and deals with them independently of each other. 
In Section \ref{duplicate} this baseline template operation  is
augmented to handle transparently and in optimal asymptotic 
efficiency duplicate keys. In our approach duplicate handling 
does not require doubling of computation time, excessive increase
of space requirements (or in the context of parallel
computing, communication time as well)
that other regular sampling/oversampling approaches, sequential
or parallel, seem to require \cite{jaja1,jaja2,jaja3}.

%
%
\begin{algorithm}
\caption{{\dsort $(X,n,p,$ \sort $)$  \{sort $n$ keys of $X$; utilize \sort for sorting \} }}
\label{DSORT}
\begin{algorithmic}[1]
\STATE {\sc BaselineSorting.} The $n$ input keys are regularly and evenly split into $p$ 
sequences each one of size   approximately $n/p$. Each sequence
is then  sorted by \sort. Let $X_k$, $0 \leq k \leq p-1$, be the $k$-th sequence after
sorting.
\STATE {\sc Partitioning: Sample Selection.}  Let $r =  \ceil{\omega_n} $ and $s=rp$. Form 
locally a sample $T_k$ from the
sorted $X_k$.  The sample consists of $rp-1$ evenly spaced
keys of $X_k$ that partition it into $rp$ evenly sized segments; append the maximum of the
sorted $X_k$ (i.e. the last key) into $T_k$ so that the latter gets $rp$ keys. 
Merge all $T_k$ into a sorted sequence $T$ of $rp^2$ keys.
\STATE {\sc Partitioning: Splitter Selection.}  
Form the splitter sequence $S$ that contains the $(i\cdot s)$-th smallest keys of $T$,
$1 \leq i \leq p-1$, where $s=rp$. 
\STATE {\sc Partitioning: Split input keys.}  
Then split the sorted $X_k$ around $S$ into sorted subsequences $X_{k,j}$, $0 \leq j \leq p-1$, for all
$0 \leq k \leq p-1$.
\STATE  {\sc Merging.} All  sorted subsequences $X_{k,j}$ for all $0 \leq k \leq p-1$ are merged
into $Y_j$, for all $0 \leq j \leq p-1$. The concatenation of $Y_j$ for all $j$ is $Y$. 
Return $Y$.
\end{algorithmic}
\end{algorithm}

Algorithm~\ref{DSORT} describes operation \dsort($X,n,p$,  \sort).
$X$ denotes the input sequence and $n$ the number of keys of $X$.
Sorting in \dsort is performed by the supplied \sort function.
Thus \dsort serves as a template
for performing sorting that can utilize
highly-optimized existing (software library) sorting algorithms. 
The parameter $p$
is user-specified and denotes the number of  sequences input $X$
will be split into;
naturally, a wrapper function of \dsort
can set the value of $p$ to some  specific value depending on the
available processor architecture.
One can thus vary $p$ in ways that will take full advantage of
multiple memory hierarchies or the underlying processor architecture whose
behavior and characteristics might not be easily understood by the 
average programmer/user. Within \dsort parameter $s$ 
is the regular oversampling factor whose value is regulated through the
choice of $\omega_n$ that also affects $r$. 
Function $\omega_n$ could have been included in the parameter list
of \dsort. However, because
it is a function of $n$ 
its value can be set permanently through $n$ inside \dsort
after     some initial calibration/benchmarking.
The theorem  and proof that follow simplify the results shown in 
a more general context in  \cite{GS96,GS99a}.
Operation \dsort in Algorithm~\ref{DSORT} corresponds to
the simplest case of the deterministic algorithm in \cite{GS99a} 
where an analysis for all possible values of processor size versus problem size
is presented.

\begin{theo}
\label{detsorting}
For any $n$ and $p \leq n$, and any function
$\omega_{n}$ of $n$ such that $\omega_{n}=\Omega(1)$,
$\omega_n = O(\lg{n})$ and $p^2 \omega^2_{n} = o(n)$,
operation \dsort
requires time $A n \lg{(n/p)} + B n \lg{p}+ o(n \lg{n})$, 
if \sort requires $A n \lg{n}$ time to sort $n$ keys and 
it takes $B n \lg{p}$ time to merge $n$ keys of $p$ sorted sequences, 
for some constants $A,B \geq 1$.
\end{theo}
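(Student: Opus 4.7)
The plan is to bound the cost of each of the five steps of Algorithm~\ref{DSORT} individually and sum them, showing that Steps~1 and 5 contribute the two named leading terms while Steps~2--4 are absorbed into $o(n\lg n)$ by virtue of the hypotheses $\omega_n=O(\lg n)$ and $p^2\omega_n^2=o(n)$.

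First I would handle the two dominating steps. In Step~1 there are $p$ independent calls to \sort on sequences of size $\lceil n/p\rceil$, costing $p\cdot A\lceil n/p\rceil\lg\lceil n/p\rceil = An\lg(n/p) + O(p\lg p)$, and the $O(p\lg p)$ term is absorbed since $p^2=o(n)$ gives $p\lg p=o(n)$. In Step~5 each output block $Y_j$ is produced by a $p$-way merge of the $p$ sorted subsequences $X_{k,j}$; by the hypothesized cost of merging this is $B|Y_j|\lg p$. Since $\sum_j |Y_j|=n$, summing gives exactly $Bn\lg p$. Importantly, this bound uses only the fact that the $X_{k,j}$ partition the input and that each $Y_j$ is a $p$-way merge, \emph{not} the balance property $|Y_j|\leq (1+\delta)n/p$; the latter is relevant to memory usage and to parallel load balancing, not to the sequential running time claimed here.

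Next I would show that Steps~2--4 contribute only $o(n\lg n)$. In Step~2 extracting $T_k$ from the already sorted $X_k$ is $O(rp)$ per sequence by indexing, and the merge of the $p$ sorted samples of $rp$ keys each is $O(Brp^2\lg p)$ by the merging bound. Since $r=\Theta(\omega_n)$ and $p^2\omega_n^2=o(n)$, we have $rp^2=O(\omega_n p^2)=o(n/\omega_n)$, so $rp^2\lg p=o(n\lg n)$ using $\omega_n=\Omega(1)$ and $\lg p\leq\lg n$. Step~3 is $O(p)$. For Step~4 I would use binary search: locating each of the $p-1$ splitters in each sorted $X_k$ takes $O(p\lg(n/p))$ per sequence, totalling $O(p^2\lg(n/p))$, which is $o(n\lg n)$ since $p^2=o(n/\omega_n^2)=o(n)$.

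The main obstacle is keeping the sample-merging cost of Step~2 under control, because this is where $\omega_n$ enters most tightly: the constraint $p^2\omega_n^2=o(n)$ in the hypothesis is calibrated exactly so that $rp^2\lg p=o(n\lg n)$, and weaker assumptions on $\omega_n$ or $p$ would spoil this bound. The remaining work is routine arithmetic: adding the contributions gives $An\lg(n/p)+Bn\lg p+o(n\lg n)$, as required.
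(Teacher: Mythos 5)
Your proposal is correct and follows the same five-step cost accounting as the paper's proof; the one substantive divergence is in Step~5. The paper asserts the $Bn\lg p$ merging bound ``as long as $\omega_n^2 p = O(n/p)$, as needed by Lemma~\ref{balance}'', whereas you observe that $\sum_j B|Y_j|\lg p = Bn\lg p$ needs only $\sum_j |Y_j| = n$ and the fact that each $Y_j$ is assembled from at most $p$ sorted pieces --- not the balance guarantee. Under the stated cost model for merging, that observation is sound, and it cleanly separates the roles of the hypotheses: the weaker condition $p^2 r = o(n)$ already suffices for the running-time bound, while the stronger $p^2\omega_n^2 = o(n)$ is carried in the theorem so that Lemma~\ref{balance} applies (the paper concedes exactly this in the closing remark of its proof). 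Your parenthetical that $p^2\omega_n^2=o(n)$ is ``calibrated exactly'' to control the sample-merging cost is therefore slightly off --- it is calibrated for the balance lemma, and is stronger than what Step~2 requires. One further small slip: the lower-order term from Step~1 is $O(p\lg(n/p)) = O(p\lg n)$ rather than $O(p\lg p)$, but either is $o(n)$ when $p^2 = o(n)$, so nothing breaks.
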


\begin{proof}{}
The input sequence is split arbitrarily into $p$ sequences of about
the same size   (plus or minus one key). This is step 1
of \dsort.
Moreover, the keys are distinct since in an extreme case,
we can always make them so by, for example, appending to them the code
for their memory location. We later explain how we handle duplicate keys
without doubling (in the worst case) the number of comparisons performed.
Parameter $r$ determines the desired upper bound in 
key imbalance of the $p$ sorted sequences $Y_k$
that will form the output.
The term $1+1 /r = 1+ 1/ \ceil{\omega_n}$ that will characterize such an imbalance
is also referred to as {\em bucket expansion}
in sampling based randomized sorting algorithms \cite{Blelloch91}.

{\em Note.} In the discussion to follow we track constant values for 
key sorting and multi-way merging but use asymptotic notation for
other low-order term operations.

In step 1, each one of the $p$ sequences is sorted independently of
each other using \sort.
As each such sequence is of size   at most  $\ceil{n/p}$, this step
requires  time $A \ceil{n/p} \lg{\ceil{n/p}}$ per sequence or a total of
$A p \ceil{n/p} \lg{\ceil{n/p}}$.
Algorithm \sort is any sequential sorting algorithm of such
performance. The overall cost of this step is $A n \lg{(n/p)} +O(p \lg{n})$.

Subsequently, within each sorted subsequence $X_k$,  
$\ceil{\omega_{n}} p - 1=rp-1$ evenly spaced sample keys are selected, 
that partition the corresponding
sequence into $r p$ evenly sized segments.  Additionally, 
the largest key of each sequence is appended to $T_k$.
Let $s = rp$ be the size   of the 
so identified sequence $T_k$.
Step 2 requires time $O(ps)=O(p^2 r)$ to perform if
the time $O(s)$ of forming one sequence is multiplied by the total number $p$
of such sample sequences.
The $p$ sorted sample sequences, each consisting of $s$ sample keys, 
are then merged or sorted into $T$. 
Let sequence $T= \langle t_{1}, t_{2}, \ldots, t_{ps}\rangle$ be 
the result of that  operation.  
The cost of step 2 can be  that of $p$-way merging i.e. 
$B ps\lg{p} =B p^2 r \lg{p}$.
In step 3,  a sequence $S$ of evenly spaced splitters is formed 
from the sorted sample by picking as splitters keys 
$t_{is}$, $1 \leq i  < p$. This step takes time $O(p)$.

Step 4 splits $X_k$ around the sample keys in $S$.
Each one of the $p$ sorted sequences  decides the position of every key 
it holds with respect to the $p-1$ splitters by
way of sequential merging the $p-1$ splitters of $S$ with the input
keys of $X_k$ in $p-1+n/p$ time per sequence. Alternately this
can be achieved by 
performing a binary search of the splitters into the sorted keys 
in time $p \lg{(n/p)}$, and subsequently counting the number of keys 
that fall into each one of the $p$ so identified subsequences induced 
by the $p-1$ splitters.
The overall running time of this step over all $p$ sequences is
thus $O(p^2 +n)$ if merging is performed or $p^2 \lg{(n/p)}$ if
binary search is performed.

In step 4,  $X_{k,j}$ is the $j$-th sorted subsequence of  
$X_k$ induced by $S$. This subsequence will become part of the 
$Y_j$-th output sequence in step 5.
In  step 5, $p$ output sequences $Y_j$ are formed that will eventually
be concatenated. Each such output sequence $Y_j$ is formed from the
at most $p$ sorted subsequences $X_{k,j}$ for all $k$, formed in step 4.
When this step is executed, by way of Lemma~\ref{balance} to be shown 
next, each $Y_j$ will comprise of at most 
$p=\mn{p, n_{\mathit{max}}}$ sorted subsequences  $X_{k,j}$ for a 
total of at most $n_{\mathit{max}}$ keys for $Y_j$, and $n$ keys for $Y$,
where
$n_{\mathit{max}} =(1+1/\ceil{\omega_{n}})  (n/p) +\ceil{\omega_n} p$.
The cost of this step is that of multi-way merging
$n$ keys by some deterministic algorithm \cite{Knuth73},
which is $B n \lg{p} $, as long as  $\omega_n^2 p = O(n/p)$, as
needed by Lemma~\ref{balance} to follow.

{\sc BaselineSorting} and {\sc Merging} thus contribute 
$A n \lg{(n/p)} + O(p \lg{n} )$ and $Bn \lg{p}$ respectively.
Sample selection and sample-sorting contributions amount to $O(p^2 r \lg{p})$.
Step 4 contributions are $O(n+p^2 )$ or $O(p^2 \lg{(n/p)})$.
Summing up all these computation terms  we get that the total runtime
is $A n \lg{(n/p)} + B n \lg{p}+ O(p^2 r \lg{n} + n)$. 
If $p^2 r = o(n)$, 
this is $A n \lg{n} +(A-B) n \lg{p} + o(n \lg{n} )$. 
Note that in the statement of the theorem we
use a stronger condition $p^2 r^2 = p^2 \omega_n^2 = o(n)$ for
Lemma~\ref{balance} to be applicable.
\end{proof}

It remains to show  that at the completion of  step 4 the input keys are
partitioned into (almost) evenly sized subsequences.
The main result is summarized in the following lemma.


\begin{lem}
\label{balance}
The maximum number of keys $n_{\mathit{max}}$ per output sequence  $Y_j$
in \dsort is given by\\
  $(1+1/\ceil{\omega_{n}})  (n/p) +\ceil{\omega_n} p$, for any
$\omega_{n}$ such that
$\omega_{n} = \Omega(1)$ and $\omega_n = O(\lg{n})$, provided that
$\omega_n^2 p = O(n/p)$ is also satisfied.
\end{lem}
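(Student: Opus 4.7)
The plan is to count, for each splitter $S_j = t_{js}$ (with $r = \ceil{\omega_n}$ and $s = rp$), how many keys of each sorted sequence $X_k$ are $\leq S_j$, and then to subtract in order to bound $|Y_j|$. The argument is positional bookkeeping made clean by the regularity of the sample.

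First I would set $\tau$ to be the (almost exact) spacing between consecutive samples of $T_k$ inside $X_k$, so that $\tau = \Theta((n/p)/s)$ and $\tau \leq (n/p)/s + 1$. The structural fact I would rely on is that the $m$-th smallest sample of $T_k$ sits at position $m\tau$ of $X_k$ for $1 \leq m \leq rp-1$, while the $rp$-th sample is the appended maximum at position $|X_k|$. Letting $c_k^{(j)}$ denote the number of samples in $T_k$ that are $\leq S_j$, I would deduce two basic inequalities: the number of keys of $X_k$ that are $\leq S_j$ is at least $c_k^{(j)}\tau$ and at most $(c_k^{(j)}+1)\tau - 1$. Since $S_j$ has rank $js$ in the merged sample $T$, the global identity $\sum_k c_k^{(j)} = js$ holds.

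Then $|Y_j|$, the number of keys of $X$ in the band $(S_j, S_{j+1}]$ (with the understanding $S_0 = -\infty$ and $S_p = +\infty$ for the extreme outputs), satisfies
\[
|Y_j| \;\leq\; \sum_{k=0}^{p-1} \bigl[\,(c_k^{(j+1)}+1)\tau - 1 - c_k^{(j)}\tau\,\bigr] \;=\; (s+p)\tau - p,
\]
which, after plugging in $\tau \leq (n/p)/s + 1$ and $s = rp$, collapses to $(1+1/r)(n/p) + rp$, matching the claimed bound.

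The main obstacle is bookkeeping at the corners of the positional argument: when $c_k^{(j)} = rp$ all samples of $T_k$ are already $\leq S_j$ and the ``next sample'' position argument degenerates, and when $j=0$ or $j=p-1$ one of the bracketing splitters is implicit. Both cases are absorbed by using $|X_k| \leq rp \cdot \tau$ (valid because of the appended maximum) and a uniform formal treatment of $S_0, S_p$. The hypothesis $\omega_{n}^{2} p = O(n/p)$ is what ensures $\tau \geq 1$, so that the evenly spaced positions $\tau, 2\tau, \ldots, (rp-1)\tau$ are distinct interior positions of $X_k$ and the counting is legitimate; it also forces the additive $rp$ slack to remain genuinely lower order than $n/p$, which is what makes the bound meaningful for the merging analysis invoked in Theorem~\ref{detsorting}.
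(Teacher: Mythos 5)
Your proposal is correct and follows essentially the same route as the paper's proof: both arguments rest on the regularity of the (padded) sample positions inside each sorted $X_k$, a positional lower/upper bound on the number of keys at most a given splitter, and the fact that the splitter $t_{js}$ has exactly $js$ sample keys below it in the merged sample, which yields a per-output bound of the form $(s+p)\tau$ and then $(1+1/\ceil{\omega_{n}})(n/p)+\ceil{\omega_{n}}p$ after substituting $\tau \leq (n+ps)/(ps)$ and $s=\ceil{\omega_{n}}p$. The only difference is organizational --- you keep per-sequence counts $c_k^{(j)}$ and sum the differences, while the paper counts segments globally to bound $b_i$ and $b_{i+1}$ directly --- and your explicit treatment of the corner cases and of the role of $\omega_n^2 p = O(n/p)$ matches the paper's implicit handling via padding and the appended maximum.
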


\begin{proof}{}
Although it is not explicitly mentioned in the description of
algorithm \dsort we may assume that we initially pad the
input so that each sequence has  exactly $\ceil{n/p}$ keys. At most
one key is added to each sequence (the maximum key can be such a
choice).  Before performing the sample selection operation, we also pad
the input so that afterwards, all segments have the same number of keys
that is, $x =\ceil{\ceil{n/p}/s}$. The padding operation requires time
at most $O(s)$, which is within the lower order terms
of the analysis of Theorem~\ref{detsorting}, and therefore, does not
affect the asymptotic complexity of the algorithm.  We note that padding
operations introduce duplicate keys; a discussion of duplicate handling
follows this proof.

Consider an arbitrary splitter $t_{is}$, where $1 \leq i < p$.
There are at least $i s x$ keys which are not larger than $s_{is}$,
since there are $i s$ segments
each of size   $x$ whose keys are not larger than $s_{is}$.  Likewise,
there are at least $(ps - i s - p + 1) x $ keys which are not smaller
than $s_{is}$, since there are $p s - i s - p + 1$ segments each
of size   $x$ whose keys are not smaller than $s_{is}$.  Thus, by noting
that the total number of keys has been increased (by way of padding
operations) from $n$ to $p s x$, the number of keys
$b_{i}$ that are smaller than $s_{is}$ is bounded as follows.
\[
 i s x \leq b_{i} \leq p s x - \lf( ps- i s - p + 1 \rg)  x.
\]
A similar bound can be obtained for $b_{i+1}$.  Substituting
$s= \ceil{\omega_{n}} p$ we therefore
conclude the following.
\[
  b_{i+1} - b_{i  } \leq  sx +px-x \leq sx+px =
    \Ceil{\omega_{n}} p x + p x.
\]
The difference $n_{i} = b_{i+1} - b_{ i }$ is independent
of $i$ and gives the maximum number of keys per split sequence.
Considering that $x \leq (n + p s)/(ps)$ and substituting $s=
\ceil{\omega_{n}} p $, the following bound is
derived.
\[
  n_{\mathit{max}} =
  \lf 1+\frac{1}{\Ceil{\omega_{n}} }\rg
      \frac{n+ps}{p}.
\]
\noindent
By substituting in the  numerator of the previous expression
$s= \ceil{\omega_{n}} p$,
we conclude that the maximum number of keys $n_{\mathit{max}}$ per
output sequence of \dsort is bounded above as follows.
\[
  n_{\mathit{max}} = \lf 1 + \frac{1}{\ceil{\omega_{n}}}\rg
    \frac{n}{p} + \ceil{\omega_{n}} p .
\]
\noindent
The lemma follows.
\end{proof}

\subsection{Duplicate-key Handling}
\label{duplicate}
\label{sd}

Algorithm \dsort, as described, does not handle
duplicate keys properly. A naive way
to handle duplicate keys is by making the keys distinct. This could
be achieved
by attaching to each key the address of the memory location it is
stored in.  For data types whose bit or byte size   is comparable to
the size   of the address describing them,
such a transformation leads -- in most cases -- to a doubling of the
overall number of comparisons performed and the communication time
in the worst case.  For
more complex data types such as strings of characters the  extra cost
may be negligible.

An alternative way to handle duplicate keys 
in a transparent way that provides asymptotic optimal efficiency and
tags only a small fraction of the keys
is the following  one.
This seems to be an improvement
over other approaches \cite{jaja1,jaja2,jaja3} that require
``doubling'' (as explained earlier.)
Procedure \sort must then be implemented
by means of a stable sequential sorting algorithm as well.
Two tags for each input key are already implicitly available by default,
and no extra memory is required to access them.
These are the sequence  identifier that stores a particular input key
and the index of the key in the local array that stores that sequence. 
No additional
space is required for the maintenance of this tagging.
In our duplicate-key handling method such tags are only used for sample
and splitter-related activity.

For sample sorting every sample key is augmented 
into a record that includes this additional tag information
(array index and sequence identifier storing the key).
Since sample size is $o(1)$ of the input keys, the memory overhead
incurred is small, as is the corresponding computational overhead.
The attached tag information is used in  step  2  to form the
sample and in sample sorting/merging and then in splitter selection,
and finally in step 4, as all
these steps require distinct keys to achieve stability.
In step 4 in particular, a binary search operation
of a splitter key into the locally sorted keys involves  first
a  comparison of the two keys, and if the comparison is not
resolved that way the use of sequence identifiers or array
indexes as well. If merging is used instead, a similar resolution
applies. In the multi-way merging of the {\sc Merging} phase,
stability is resolved by the merging algorithm itself.
The computation overhead of duplicate handling that
is described by this method
is within the lower order terms of the analysis
and therefore, the optimality claims still hold unchanged.
The results on key imbalance still hold as well.
This same duplicate handling method is also used in  \rsort.

\subsection{Operation \rsort}

In this section we show how to modify \dsort
to form randomized operation \rsort.
Random oversampling-based algorithms in the traditional
approach of \cite{HC,Reischuk85,Reif87,Gerbessiotis94} do not 
involve a {\sc BaselineSorting} phase that thus distinguishes
\dsort and \rsort from other similar approaches.

%
%
\begin{algorithm}
\caption{{\rsort $(X,n,p,$ \sort $)$  \{sort $n$ keys of $X$; utilize \sort for sorting \} }}
\label{RSORT}
\begin{algorithmic}[1]
\STATE {\sc BaselineSorting.} The $n$ input keys are regularly and evenly split into $p$ 
sequences each one of size   approximately $n/p$. Each sequence
is then  sorted by \sort. Let $X_k$, $0 \leq k \leq p-1$, be the $k$-th sequence after
sorting.
\STATE {\sc Partitioning: Sample Selection.}  Let $s =   2\omega_n^2 \lg{n} $. 
Form a sample $T$ from the corresponding $X_k$.  The sample consists of $sp-1$ keys
selected uniformly at random from the keys of all $X_k$. 
Sort $T$ using \sort.
\STATE {\sc Partitioning: Splitter Selection.}  
Form the splitter sequence $S$ that contains the $(i\cdot s)$-th smallest keys of $T$,
$1 \leq i \leq p-1$. 
\STATE {\sc Partitioning: Split input keys.}  
Split the sorted $X_k$ around $S$ into sorted subsequences $X_{k,j}$, $0 \leq j \leq p-1$, for all
$0 \leq k \leq p-1$.
\STATE  {\sc Merging.} All  sorted subsequences $X_{k,j}$ for all $0 \leq k \leq p-1$ are merged
into $Y_j$, for all $0 \leq j \leq p-1$. The concatenation of $Y_j$ for all $j$ is $Y$. 
Return $Y$.
\end{algorithmic}
\end{algorithm}

Although  partitioning and oversampling in the context of sorting
are well established techniques  \cite{HC,Reischuk85,Reif87}, 
the analysis in \cite{Gerbessiotis94} summarized in
Claim \ref{Sampling1} below allows one to 
quantify precisely the key imbalance of the output sequences $Y_j$.
Let $X = \langle x_{1}, x_{2}, \ldots , x_{n} \rangle$ be an ordered
sequence of keys indexed such that $x_{i} < x_{i+1}$, for all
$1 \leq i  \leq n-1$. The implicit assumption is that keys are unique.
Let
$Y = \{y_{1}, y_{2}, \ldots, y_{ps-1}\}$ be a randomly chosen 
subset of $ps-1 \leq n$ keys of $X$ also indexed such that 
$y_{i} < y_{i+1}$, for all $1 \leq i  \leq ps-2$,  for some 
positive integers $p$ and $s$. Having
randomly selected set $Y$, a partitioning of  $X - Y$ into $p$
subsets, $X_{0}, X_{1}, \ldots, X_{p-1}$ takes place.
The following result shown in \cite{Gerbessiotis94}
is independent of the distribution of the input keys.

\begin{cla}
\label{Sampling1}
Let $p \geq 2$, $s \geq 1$, $ps < n/2$, $n \geq 1$,  $0 < \varepsilon < 1$,
$\rho >0$, and

\[
  s \geq \frac{1+\varepsilon}{\varepsilon^{2}} \lf( 2 \rho \log{n} +
         \log{(2 \pi p^{2}(ps-1)e^{1/(3(ps-1))})} \rg).
\]
\noindent
Then the probability that any one of the $X_{i}$, for all $i$,
$0 \leq i \leq p-1$, is of size more than
$\ceil{(1+\varepsilon)(n-p+1)/p}$ is at most $n^{-\rho}$.
\end{cla}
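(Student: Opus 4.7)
The plan is to reduce the event ``some $X_i$ is too large'' to a hypergeometric tail problem and then apply a union bound over $i$ and a Stirling-based estimate. Fix $k=\lceil(1+\varepsilon)(n-p+1)/p\rceil$ and, for $1\le i\le p-1$, consider the event $E_i=\{|X_i|>k\}$. By the definition of splitters as the $(is)$-th sample keys, $E_i$ is equivalent to the statement that the window of consecutive ranks of $X$ delimited by $y_{is}$ and $y_{(i+1)s}$ has length exceeding $k+1$, i.e., there exists a window of $k+1$ consecutive ranks of $X$ in which the number of sample keys is at most $s$ (the two splitters bounding the window together with the $s-1$ samples between them). The boundary subsets $X_0$ and $X_{p-1}$ are handled by the same device, with $-\infty$ and $+\infty$ replacing the missing splitter; the associated one-sided window gives an even stronger constraint, so the same bound applies.

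For a fixed window $W$ of $k+1$ consecutive ranks of $X$, the number $N_W$ of samples that land in $W$ has a hypergeometric distribution with parameters $(n,\,ps-1,\,k+1)$ and mean $\mu=(k+1)(ps-1)/n$. The choice of $k$ guarantees $\mu\ge(1+\varepsilon)s$ up to a lower-order correction controlled by the condition $ps<n/2$, so the event $E_i$ forces $N_W\le s\le \mu/(1+\varepsilon)$. The proof thus reduces to bounding the lower hypergeometric tail $\Pr[N_W\le s]$, after which a union bound over the $p$ choices of $i$ (and the relevant window positions) multiplies the probability by a factor at most $p$, which is exactly the source of the $p^2$ inside the logarithm in the hypothesis (one $p$ from the union bound, one $p$ from the prefactor in the Stirling estimate).

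The tail bound itself I would obtain by writing the hypergeometric probability as a ratio of binomial coefficients and bounding it term by term using the two-sided Stirling estimate $\sqrt{2\pi m}\,(m/e)^m\le m!\le \sqrt{2\pi m}\,(m/e)^m e^{1/(12m)}$. This yields a bound of the form $\Pr[N_W\le s]\le C(p,s,n)\cdot\exp\!\bigl(-\tfrac{\varepsilon^2}{1+\varepsilon}\,s\bigr)$, where the prefactor $C(p,s,n)$ is a polynomial in $p,s,n$ whose logarithm is exactly $\log\!\bigl(2\pi p^{2}(ps-1)e^{1/(3(ps-1))}\bigr)$ once the union-bound factor is absorbed; the $2\pi$ comes from the Gaussian constant in Stirling, the $(ps-1)$ from the square-root factor for the middle coefficient, and the $e^{1/(3(ps-1))}$ from the $e^{1/(12m)}$ correction accumulated over three factorials in the hypergeometric ratio. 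Imposing $C(p,s,n)\exp\!\bigl(-\tfrac{\varepsilon^2}{1+\varepsilon}s\bigr)\le n^{-\rho}$ and taking logarithms recovers precisely the lower bound on $s$ stated in the hypothesis.

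The main obstacle is the bookkeeping in the Stirling step, not any conceptual difficulty: one must track the constants so that they land exactly as $2\pi p^{2}(ps-1)e^{1/(3(ps-1))}$ rather than some slightly looser expression. The translation between $E_i$ and the hypergeometric window event, and the verification that $\mu\ge(1+\varepsilon)s$ under the stated conditions $ps<n/2$ and $0<\varepsilon<1$, are both routine once the right window length $k+1$ is fixed. Everything else, including the distribution-independence of the bound, follows from the fact that the analysis is carried out entirely on the ranks of the sampled keys within $X$, which is a uniformly random $(ps-1)$-subset of $\{1,\dots,n\}$.
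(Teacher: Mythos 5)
First, a point of reference: the paper does not prove Claim~\ref{Sampling1} at all --- it is imported verbatim from \cite{Gerbessiotis94} --- so there is no internal proof to compare you against. Judged on its own terms, your outline follows what is almost certainly the intended route: reduce the overflow of a bucket $X_i$ to the event that a window of consecutive ranks of $X$ captures too few of the $ps-1$ uniformly random sample ranks, bound that lower tail by Stirling estimates on the exact ratio of binomial coefficients, finish with a union bound over the $p$ buckets, and observe that everything lives on ranks, whence distribution-independence. That skeleton is sound.

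The gap is that every quantitatively load-bearing step is asserted rather than carried out, and the asserted bookkeeping provably cannot reproduce the hypothesis. Concretely: (i) the reduction is stated with the wrong window: if $|X_i|>k$ then the open interval between $y_{is}$ and $y_{(i+1)s}$ contains more than $k+s-1$ keys of which only $s-1$ are samples, so the relevant window has length about $k+s$, not $k+1$, and your own count of samples in it wavers between ``at most $s$'' and ``$s+1$''; (ii) more seriously, you claim a per-window tail of $\exp(-\varepsilon^{2}s/(1+\varepsilon))$, but for a lower tail at relative deviation $\varepsilon/(1+\varepsilon)$ below a mean of $(1+\varepsilon)s$ the best attainable exponent is $s(\varepsilon-\log(1+\varepsilon))$, which is asymptotically $\varepsilon^{2}s/2$ and is bounded below by $\varepsilon^{2}s/(2(1+\varepsilon))$ but is strictly smaller than $\varepsilon^{2}s/(1+\varepsilon)$; your claimed exponent is off by a factor of $2$ and unattainable. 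That missing factor of $2$ is visible in the hypothesis itself, which reads $2\rho\log n+\log(2\pi p^{2}(ps-1)e^{1/(3(ps-1))})=2\bigl(\rho\log n+\log(p\sqrt{2\pi(ps-1)}\,e^{1/(6(ps-1))})\bigr)$, i.e.\ the square of a single union-bound factor $p$ times a single Stirling prefactor, pushed through the $1/2$ in the exponent. Your accounting (``one $p$ from the union bound, one $p$ from the Stirling prefactor''; ``$e^{1/(12m)}$ accumulated over three factorials'') therefore does not match the target expression, which indicates the Stirling computation --- the entire content of the claim, since the qualitative statement is routine --- has not been done. To close the gap you must write the overflow probability of a single bucket as an explicit ratio of binomial coefficients (no union over window positions is needed if you use the exact spacing distribution of $R_{(i+1)s}-R_{is}$), apply the two-sided Stirling bounds you quote, and verify that the resulting inequality is implied by the stated lower bound on $s$ together with $ps<n/2$ and $0<\varepsilon<1$.
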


In such traditional randomized algorithms that use oversampling the
complexity of splitting  keys around the splitters 
is inconsequential.  For each input key a binary search operation 
on the $p-1$ splitters determines the position of that 
input key in one of the $p$ output sequences that will
be formed and then sorted. This is much more involved than say 
step 4 of \dsort that can involve a binary search of the $p-1$ 
splitters into each one of the $p$ sorted sequences
of approximately $n/p$ keys, an operation that has better
locality of reference.

Operation \rsort based on \dsort is
described in Algorithm~\ref{RSORT}
The analysis of \rsort is identical to that of \dsort
described in Theorem~\ref{detsorting}. The major difference
involves sample sorting. The sample $ps$ can be sorted directly
in time $Aps \lg{(ps)}$ using \sort rather than merged  
in time $O(ps \lg{p})$.
However in both cases the asymptotics of these two terms
 remain identically the same $O(ps \lg{n})$.
Thus the running time contribution of \rsort can be
summarized as $A n\lg{n}+(A-B)n \lg{p} + O( ps \lg{n} + n)$.
From the latter, if $ps = o(n)$, then the running time
becomes
$A n\lg{n}+(A-B)n \lg{p} + o(n \lg{n})$. For $ps =o(n)$ we
need $2p \omega_n^2 \lg{n} = o(n)$.
Theorem \ref{iransorting} is then derived.

\begin{theo}
\label{iransorting}
For any $n$ and $p \leq n$, and any function
functions $\omega_n$ and $s=2 \omega_n^2 \lg{n}$ such that 
$p s <n/2$, and $ps = o(n)$,
algorithm \rsort is  such that its running time is
bounded by $A n\lg{n}+(A-B)n \lg{p} + o(n \lg{n})$,
if \sort requires $A n \lg{n}$ time to sort $n$ keys and 
it takes $B n \lg{p}$ time to merge $n$ keys of $p$ sorted sequences, 
for some constants $A,B \geq 1$.
\end{theo}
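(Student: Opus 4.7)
The plan is to mirror the step-by-step cost accounting of the proof of Theorem \ref{detsorting}, adjusting only in the two places where \rsort differs from \dsort: (i) the cost of sorting the sample in step 2, because the random sample $T$ is now a single unsorted sequence of size $ps-1$ rather than the union of $p$ pre-sorted subsequences, and (ii) the bound on $|Y_j|$ that enters the {\sc Merging} cost, where the deterministic Lemma \ref{balance} must be replaced by the probabilistic guarantee of Claim \ref{Sampling1}.

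For the cost breakdown I would argue as follows. {\sc BaselineSorting} still contributes $A n \lg{(n/p)} + O(p \lg{n})$ since step 1 is identical to that of \dsort. In step 2, drawing the $ps-1$ random sample keys takes $O(ps)$ time and sorting them with \sort takes $A (ps) \lg{(ps)} = O(ps \lg{n})$; this replaces the $O(p^2 r \lg{p})$ term of \dsort but is asymptotically of the same order $O(ps \lg{n})$ under the stated hypotheses. Steps 3 and 4 remain unchanged from \dsort, contributing $O(p)$ and $O(n+p^2)$ (or $O(p^2 \lg{(n/p)})$ if binary search is used). Finally, the {\sc Merging} step contributes $B n \lg{p}$ provided each $Y_j$ contains $O(n/p)$ keys.

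The heart of the proof is therefore invoking Claim \ref{Sampling1} to justify this size bound. I would choose $\varepsilon = 1/\omega_n$ together with some constant $\rho \geq 1$, and then verify that $s = 2\omega_n^2 \lg{n}$ satisfies the sample-size inequality of Claim \ref{Sampling1}. With this choice $(1+\varepsilon)/\varepsilon^2 = \omega_n^2 + \omega_n$, while the parenthesized term $2\rho \log{n} + \log(2\pi p^2 (ps-1) e^{1/(3(ps-1))})$ is $O(\lg{n})$ under $ps < n/2$, so the inequality should hold for all sufficiently large $n$. With probability at least $1 - n^{-\rho}$, every $Y_j$ then contains at most $\ceil{(1 + 1/\omega_n)(n-p+1)/p} = O(n/p)$ keys, which is precisely what is needed to justify the $B n \lg{p}$ multi-way merging cost.

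Summing all contributions yields $A n \lg{(n/p)} + B n \lg{p} + O(ps \lg{n} + n + p^2 \lg{(n/p)})$. Under $ps = o(n)$ (and noting $p^2 \leq ps$, so $p^2 \lg{(n/p)} = o(n \lg{n})$), the remainder is $o(n \lg{n})$, and rewriting the leading terms in the form of the theorem statement is then routine. The main obstacle I expect is tuning $\varepsilon$ as a function of $\omega_n$ so that Claim \ref{Sampling1} applies throughout the admissible range of $\omega_n$; a secondary subtlety is that the imbalance bound holds only with high probability, so either the theorem must be read as a high-probability statement, or the rare failure event is handled by falling back to a plain \sort of the whole input whose $n^{-\rho} \cdot A n \lg{n}$ expected contribution is absorbed into the $o(n\lg{n})$ term.
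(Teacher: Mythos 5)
Your proposal follows essentially the same route as the paper: the paper's own argument simply reuses the cost accounting of Theorem~\ref{detsorting}, replaces the sample-merging term by the direct sort cost $A\,ps\lg(ps)=O(ps\lg n)$, and concludes from $ps=o(n)$, with Claim~\ref{Sampling1} cited as the tool quantifying the imbalance of the $Y_j$. You are in fact more explicit than the paper on the one delicate point — instantiating $\varepsilon$ and $\rho$ in Claim~\ref{Sampling1} and acknowledging that the balance guarantee is only a high-probability one (note that with $s=2\omega_n^2\lg n$ the exact constant in $\varepsilon=\Theta(1/\omega_n)$ needs adjusting for the claim's inequality to hold, a detail the paper silently omits).
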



\subsection{Multi-core adaptations}

We perform a minimal and straightforward modification that 
involves minimal parallelization: the adaptation of step 1 and 
potentially step 5 for multi-core and other parallel 
architectures. In the experimental study of the
following section, only step 1 of \mrsort was thus modified to take
advantage of multi-core architectures.
If a processor with $m$ cores is available, the $p$
sorting operations of step 1 ({\sc BaselineSorting}) can be
distributed evenly among the $m$ cores. The contribution to the
runtime of this would be $(An/m) \lg{(n/p)}$. Similarly
for step 5 we would get a time reduction to $(Bn/m) \lg{p}$. For
all these claims to hold we must have that $m \leq p$ and 
$p$ should also be a multiple of $m$. 
Thus the runtime of 
Theorem~\ref{iransorting} or
Theorem~\ref{detsorting}
can be reduced by a multiplicative factor $m$
%
As we have already mentioned the multi-core implementation
of the experimental study forfeits the parallelization of
step 5. As a result its runtime is expected to behave
according to the expression $ (An/m) \lg{(n/p)} + Bn \lg{p}$,
where low-order terms are not shown. In fact it is written so that
it can spawn a number of threads $t$ which can be higher than the number
of available cores.

\section{Experimental Study}

Operations \dsort and \rsort developed under this work
have been implemented in ANSI C and their
performance studied for a variety of \sort functions with the
intent to verify whether the theoretical analysis can be verified
in practice.
Moreover, \mrsort, the version or \rsort that takes advantage
of multi-core architectures as described in the previous section was also
implemented.

The resulting source code is publically available through 
the author's web-page \cite{AVG16}. 
The implementation is in standard ANSI C.
A quad-core Intel Xeon E3-1240 3.3Ghz Scientific Linux 7 workstation with 
16GiB of memory has been used for the experiments.
The source code is  compiled using the native gcc compiler 
{\tt gcc version 4.8.5} with optimization options 
{\tt -O3 -mtune=native -march=native}
and using otherwise the default compiler and library installation.

Among the candidates for \sort used, function {\tt qsort} is the system
supplied ANCI C library function used for calibration and base reference; 
the other functions is an author-derived version of heapsort
and an author-derived version of recursive quicksort.
We call them {\tt qs, hs, rq} respectively.
Indicated timing results (wall-clock time in seconds)  in the tables to follow
are the averages of three experiments.
The input consists of random strings 32B long; 
string comparison is performed through the {\tt memcmp} 
ANSI C function.
Each byte takes values uniformly distributed between 0 and 255.

Timing data are reported for standalone
execution of  {\tt qs, hs, rq} when run independently of \dsort and \rsort
under a column labeled \sort.
Timing data for \dsort and \rsort are shown for each one of the three 
choices of \sort : {\tt qs, hs, rq},
for a variety of problem sizes such as
$n=1024000$, $n=4096000$,  $n=8192000$, $n=16384000$ and $n=32768000$,
and for various  splitter sizes $p-1$ such as  $p=4, 32, 64, 128, 256$; 
note that $p-1$ is the number of splitters used and $p$ is the number of split
subsequences induced by the $p-1$ splitters.
The default value of sample size is controlled by parameter $s$ in \rsort and 
it is  $s=2 \omega_n^2 \lg{n} =  \lg^2{(n)} $ rounded-up. 
Additional data are provided separately for the middle problem size of $n=8192000$: 
sample size other than the default value is being used then.
Thus in Table~\ref{Table2}, we vary $s$ by using parameter $a$ defined as 
$s=\log^{1+a}{n}$.  This is equivalent to choosing $\omega_n \approx \log^{a/2}{n}$.
For such a definition of $a$, values of  $a$ running between $0.2$ and $1.8$ in 
increments of $0.4$ are being used in the timing data reported.
For \dsort sample size is regulated by parameter  $r= \omega_n$ and we choose it
be a small integer in the range 1-5.

In Table~\ref{Table1}  timing results  for \rsort are reported, 
where \sort is each one of {\tt qs, hs, rq} and the default $s$, as specified
earlier, is used.
We observe that even for the smallest problem size, and the most time-efficient \sort 
(i.e. {\tt qsort}) operation \rsort offers better (though marginally)
performance than standalone {\tt qsort}.
For increasing problem sizes, the best savings occur for increasing values of $p$, 
up to a point that makes sample size which is dependent on $p$ a contributing factor.
One can potentially  extract better performance out of
\rsort by fine-tuning $p$ and $s$ to values other that the ones chosen. 
A smaller sample size that is a fraction of the theoretically chosen one can 
speed things up as well.
This is shown in Table~\ref{Table2} where for a fixed problem size of $n=8192000$ the
2.74 or so running time of {\tt qs} is bested not by the default $s$ (implied by $a=1.0$) 
but by a slight smaller $s$ obtained for $a=0.2$ or $a=0.4$. 
Such savings however are marginal and thus using the default value for $s$ is 
satisfactory enough.
For {\tt hs} or {\tt rq} the use of \rsort improves  upon the standalone use of
{\tt hs} or {\tt rq} by 30\% or more and 15\% or more respectively.

In Table~\ref{Table3} and Table~\ref{Table4} even the naive ''parallelization''
available through the modification of \rsort into \mrsort
provides substantial improvement in performance. The latter table does not
report timing data for $p=4$ as the number of threads $t$ is greater than $p=4$
and our code requires $p$ to be equal to at least the number of threads used.
The multi-core version of \rsort (i.e. \mrsort) that uses {\tt qsort} for \sort
and  four threads easily provides a speedup in performance of a factor of 1.6
(0.18 vs 0.30 sec for $n=1024000$) to 2.4 (4.96 vs 12.05 sec for $n=32768000$).
For {\tt hs} and {\tt rq} the reported speedup in performance in Table~\ref{Table3} 
is higher and can reach 4 or so for the former and 2.9 for the latter. 
Note that  using a quick-sort based algorithm ({\tt qs} or {\tt rq})  for \sort  in
a multi-threaded/multi-core set-up, best performance is obtained for $p=4$ i.e. for
a value equal to the number of cores.
This has to do with the inherent values  of $A$ and $B$ of
Theorem~\ref{iransorting} for {\tt qs, hs, rq}.
Note that the architecture used for the experimental study is a quad-core processor 
that supports up to 8 threads (two per core) in hardware. 
Thus we have been able to run \mrsort using 
8 threads so that we can take full advantage of the underlying hardware support.
As it is evident from Table~\ref{Table4} further improvement in performance is possible
and overall execution times are lower for all problem sizes and choices of the 
\sort function. Moreover Table~\ref{Table5} and Table~\ref{Table6} confirm our previous
findings: varying sample size can improve performance but only slightly relative to the
default value used.

Overall, all three choices of \sort can benefit from the use of our proposed operation \rsort.
The benefits are modest and marginal for {\tt qs} and maximized for the slowest of our
implementation. However when the multi-core version of our proposed operation
is employed i.e. \mrsort, that uses just a marginal and ''naive'' parallelization 
(eg assignment of {\sc BaselineSorting} tasks to different threadsc/cores), 
savings can be significant and speed up in performance
by a factor of 1.5 to  2.5 is possible even for the system's {\tt qsort} implementation.
Thus employing our proposed sorting operation \mrsort and using in it {\tt qsort} for \sort can lead to
two-fold to three-fold increase in performance relative to using the standalone {\tt qsort} function.


Finally Table~\ref{Table7} and Table~\ref{Table8} present some experimental results related to
operation \dsort. Table~\ref{Table7} is similar to Table~\ref{Table1}. Note that there is some
variablity in the results of column \sort between the two tables. With the exception of
possibly problem size $n=1024000$, the results of \rsort in Table~\ref{Table1} are slightly
better than those of \dsort consistently as they should be. 

For the small problem size, it is possible that the regularity of the sample
in \dsort affects overall performance marginally. 
The imbalance of the sizes of sequences $Y_j$ in step 5 of \dsort are indeed higher than those of \rsort
for the value of $r=1$ used in Table~\ref{Table7}  and the default $s$ of Table~\ref{Table1} .
Note that in the former case the imbalance as controlled by $\omega_n$ is $r=\lceil omega_n \rceil$
and $r=1$. For Table~\ref{Table1},  $\omega_n$ is much larger (approximately $\sqrt{\lg{n}}$) 
and thus the imbalance of $Y_j$ smaller for the default value of $s$.
Other than that for all the experiments the remarks related to Table~\ref{Table1} still apply.
And so do the remarks related to Table~\ref{Table2}. The role of $r$ in controlling 
deterministic sample size in \dsort is assigned to $a$ in \rsort.
Thus we saw no reason to modify \dsort the way we modified \rsort to generate \mrsort.

\section{Conclusion}

We have presented  sequential sorting operations inspired by parallel
computing techniques and developed new sequential sorting methods that can 
improve the performance of generic and optimized sorting algorithm 
implementations available in various programming libraries or provided by
programmers. 
We have  implemented one deterministic and one randomized sorting
operation using this method for   a variety of auxiliary generic
\sort functions including the C Standard library available {\tt qsort} 
and  studied and compared  the performance of
standalone \sort operations against our  proposed
\dsort and \rsort operations. Our \dsort and \rsort operations improved
the performance of standalone \sort in all cases.
In addition we have
presented a simple and easy to develop multi-core implementation of 
the \rsort method denoted \mrsort. 
This multi-core implementation, even with the
current limitations of our implementation,
shows significant, though expected, performance improvements against 
optimized sorting implementations such as the system available {\tt qsort}
function.
The conclusion drawn as
a result of the experimental study that we undertook
is that parallel computing techniques
designed to effect blocked interprocessor communication and to 
take advantage of locality of reference can provably benefit sequential 
computing as well, and can lead to a new set of sorting algorithms. 
Some code used in the experimental study reported in this work was from a prior
project supported in part by NSF grant NSF/ITR ISS-0324816.





\newpage

\begin{table}[h!b!p!]\centering
{\begin{tabular}{|l r|r|r|r|r|r|}\hline
\multicolumn{2}{|c|}{} &
\multicolumn{5}{c|}{\rsort}       \\ \hline
\multicolumn{2}{|c|}{Problem size : \sort (sec)} &
\multicolumn{1}{c|}{$p=4$}         &
\multicolumn{1}{c|}{$p=32$}        &
\multicolumn{1}{c|}{$p=64$}        &
\multicolumn{1}{c|}{$p=128$}       &
\multicolumn{1}{c|}{$p=256$}       \\ \hline
$n=1024000$  : qs & 0.30  & 0.29 & 0.28 & 0.28 & 0.28 & 0.30 \\
$n=1024000$  : hs & 0.56  & 0.46 & 0.41 & 0.41 & 0.40 & 0.42 \\
$n=1024000$  : rq & 0.40  & 0.37 & 0.35 & 0.35 & 0.35 & 0.36 \\ \hline
$n=4096000$  : qs & 1.32  & 1.34 & 1.28 & 1.25 & 1.25 & 1.26 \\
$n=4096000$  : hs & 2.73  & 2.35 & 1.87 & 1.83 & 1.81 & 1.77 \\
$n=4096000$  : rq & 1.74  & 1.67 & 1.56 & 1.55 & 1.53 & 1.53 \\ \hline
$n=8192000$  : qs & 2.74  & 2.78 & 2.71 & 2.66 & 2.64 & 2.61 \\
$n=8192000$  : hs & 6.06  & 5.25 & 4.03 & 3.84 & 3.79 & 3.67 \\
$n=8192000$  : rq & 3.64  & 3.42 & 3.30 & 3.25 & 3.21 & 3.17 \\ \hline
$n=16384000$ : qs & 5.78  & 5.86 & 5.71 & 5.64 & 5.54 & 5.46 \\
$n=16384000$ : hs &13.12  &11.64 & 9.08 & 8.29 & 7.90 & 7.75 \\
$n=16384000$ : rq & 7.32  & 7.36 & 7.06 & 6.82 & 6.71 & 6.64 \\ \hline
$n=32768000$ : qs &12.05  &12.22 &12.05 &11.86 &11.70 &11.48 \\
$n=32768000$ : hs &28.89  &25.40 &20.37 &18.64 &17.16 &16.39 \\
$n=32768000$ : rq &15.13  &15.10 &14.74 &14.64 &14.33 &13.86 \\ \hline
\end{tabular}}
\caption{\rsort : varying problem size $n$ and splitter size $p-1$
\label{Table1}}
\end{table}

\begin{table}[h!b!p!]\centering
{\begin{tabular}{|l r|r|r|r|r|r|r|r|r|r|}\hline
\multicolumn{2}{|c|}{Splitter size ($p-1$) : \sort (sec)} &
\multicolumn{1}{c|}{$a=0.2$}       &
\multicolumn{1}{c|}{$a=0.6$}       &
\multicolumn{1}{c|}{$a=1.0$}       &
\multicolumn{1}{c|}{$a=1.4$}       &
\multicolumn{1}{c|}{$a=1.8$}       \\ \hline
$p=64$  : qs & 2.75   & 2.65 & 2.66 & 2.66 & 2.66 & 2.71 \\
$p=64$  : hs & 6.07   & 3.88 & 3.87 & 3.88 & 3.91 & 3.95 \\
$p=64$  : rq & 3.64   & 3.24 & 3.24 & 3.24 & 3.27 & 3.30 \\ \hline
$p=256$ : qs & 2.75   & 2.59 & 2.61 & 2.62 & 2.66 & 2.80 \\
$p=256$ : hs & 6.07   & 3.68 & 3.65 & 3.67 & 3.73 & 4.05 \\
$p=256$ : rq & 3.64   & 3.15 & 3.16 & 3.18 & 3.23 & 3.48 \\ \hline
\end{tabular}}
\caption{\rsort : varying sample parameters $a$ and $p$  for $n=8192000$
\label{Table2}}
\end{table}


\begin{table}[h!b!p!]\centering
{\begin{tabular}{|l r|r|r|r|r|r|}\hline
\multicolumn{2}{|c|}{} &
\multicolumn{5}{c|}{\mrsort}       \\ \hline
\multicolumn{2}{|c|}{Problem size : \sort (sec)} &
\multicolumn{1}{c|}{$p=4$}         &
\multicolumn{1}{c|}{$p=32$}        &
\multicolumn{1}{c|}{$p=64$}        &
\multicolumn{1}{c|}{$p=128$}       &
\multicolumn{1}{c|}{$p=256$}       \\ \hline
$n=1024000$  : qs & 0.30  & 0.18 & 0.21 & 0.20 & 0.21 & 0.25 \\
$n=1024000$  : hs & 0.56  & 0.25 & 0.25 & 0.27 & 0.27 & 0.29 \\
$n=1024000$  : rq & 0.40  & 0.19 & 0.23 & 0.24 & 0.25 & 0.28 \\ \hline
$n=4096000$  : qs & 1.32  & 0.62 & 0.66 & 0.69 & 0.69 & 0.81 \\
$n=4096000$  : hs & 2.73  & 0.93 & 0.86 & 0.87 & 0.90 & 0.96 \\
$n=4096000$  : rq & 1.74  & 0.65 & 0.74 & 0.76 & 0.80 & 0.83 \\ \hline
$n=8192000$  : qs & 2.74  & 1.20 & 1.32 & 1.38 & 1.43 & 1.45 \\
$n=8192000$  : hs & 6.06  & 1.90 & 1.78 & 1.73 & 1.71 & 2.11 \\
$n=8192000$  : rq & 3.64  & 1.28 & 1.53 & 1.48 & 1.69 & 1.63 \\ \hline
$n=16384000$ : qs & 5.78  & 2.43 & 2.73 & 2.81 & 2.85 & 2.86 \\
$n=16384000$ : hs &13.12  & 4.00 & 3.81 & 3.72 & 3.59 & 3.55 \\
$n=16384000$ : rq & 7.32  & 2.54 & 2.91 & 3.01 & 3.03 & 3.23 \\ \hline
$n=32768000$ : qs &12.05  & 4.96 & 5.71 & 5.90 & 6.08 & 6.04 \\
$n=32768000$ : hs &31.74  & 8.66 & 8.22 & 8.13 & 7.78 & 7.53 \\
$n=32768000$ : rq &15.13  & 5.19 & 6.01 & 6.21 & 6.89 & 6.53 \\ \hline
\end{tabular}}
\caption{\mrsort : 4 threads, varying problem size $n$ and splitter size $p-1$
\label{Table3}}
\end{table}



\begin{table}[h!b!p!]\centering
{\begin{tabular}{|l r|r|r|r|r|r|}\hline
\multicolumn{2}{|c|}{} &
\multicolumn{5}{c|}{\mrsort}       \\ \hline
\multicolumn{2}{|c|}{Problem size : \sort (sec)} &
\multicolumn{1}{c|}{$p=4$}         &
\multicolumn{1}{c|}{$p=32$}        &
\multicolumn{1}{c|}{$p=64$}        &
\multicolumn{1}{c|}{$p=128$}       &
\multicolumn{1}{c|}{$p=256$}       \\ \hline
$n=1024000$  : qs & 0.30  &      & 0.12 & 0.13 & 0.16 & 0.17 \\
$n=1024000$  : hs & 0.56  &      & 0.15 & 0.16 & 0.18 & 0.20 \\
$n=1024000$  : rq & 0.40  &      & 0.13 & 0.14 & 0.15 & 0.18 \\ \hline
$n=4096000$  : qs & 1.32  &      & 0.52 & 0.55 & 0.64 & 0.64 \\
$n=4096000$  : hs & 2.73  &      & 0.67 & 0.68 & 0.70 & 0.76 \\
$n=4096000$  : rq & 1.74  &      & 0.54 & 0.59 & 0.62 & 0.68 \\ \hline
$n=8192000$  : qs & 2.74  &      & 1.12 & 1.13 & 1.19 & 1.28 \\
$n=8192000$  : hs & 6.06  &      & 1.47 & 1.46 & 1.47 & 1.52 \\
$n=8192000$  : rq & 3.64  &      & 1.13 & 1.19 & 1.27 & 1.36 \\ \hline
$n=16384000$ : qs & 5.78  &      & 2.35 & 2.42 & 2.46 & 2.60 \\
$n=16384000$ : hs &13.12  &      & 3.18 & 3.16 & 3.13 & 3.17 \\
$n=16384000$ : rq & 7.32  &      & 2.40 & 2.49 & 2.59 & 2.76 \\ \hline
$n=32768000$ : qs &12.05  &      & 4.99 & 5.08 & 5.24 & 5.38 \\
$n=32768000$ : hs &28.89  &      & 6.88 & 6.82 & 6.77 & 6.71 \\
$n=32768000$ : rq &15.13  &      & 5.01 & 5.23 & 5.51 & 5.64 \\ \hline
\end{tabular}}
\caption{\mrsort : 8 threads, varying problem size $n$ and splitter size $p-1$
\label{Table4}}
\end{table}

\begin{table}[h!b!p!]\centering
{\begin{tabular}{|l r|r|r|r|r|r|r|r|r|r|}\hline
\multicolumn{2}{|c|}{Splitter size ($p-1$) : \sort (sec)} &
\multicolumn{1}{c|}{$a=0.2$}       &
\multicolumn{1}{c|}{$a=0.6$}       &
\multicolumn{1}{c|}{$a=1.0$}       &
\multicolumn{1}{c|}{$a=1.4$}       &
\multicolumn{1}{c|}{$a=1.8$}       \\ \hline
$p=64$  : qs & 2.75   & 1.36 & 1.36 & 1.38 & 1.38 & 1.40 \\
$p=64$  : hs & 6.07   & 1.75 & 1.77 & 1.75 & 1.78 & 1.80 \\
$p=64$  : rq & 3.64   & 1.52 & 1.51 & 1.50 & 1.51 & 1.55 \\ \hline
$p=256$ : qs & 2.75   & 1.48 & 1.43 & 1.50 & 1.55 & 1.64 \\
$p=256$ : hs & 6.07   & 1.80 & 1.82 & 1.78 & 1.85 & 2.11 \\
$p=256$ : rq & 3.64   & 1.62 & 1.64 & 1.65 & 1.69 & 1.88 \\ \hline
\end{tabular}}
\caption{\mrsort : 4threads, varying sample parameters $a$ and $p$  for $n=8192000$
\label{Table5}}
\end{table}

\begin{table}[h!b!p!]\centering
{\begin{tabular}{|l r|r|r|r|r|r|r|r|r|r|}\hline
\multicolumn{2}{|c|}{Splitter size ($p-1$) : \sort (sec)} &
\multicolumn{1}{c|}{$a=0.2$}       &
\multicolumn{1}{c|}{$a=0.6$}       &
\multicolumn{1}{c|}{$a=1.0$}       &
\multicolumn{1}{c|}{$a=1.4$}       &
\multicolumn{1}{c|}{$a=1.8$}       \\ \hline
$p=64$  : qs & 2.75   & 1.13 & 1.13 & 1.14 & 1.15 & 1.17 \\
$p=64$  : hs & 6.07   & 1.45 & 1.45 & 1.46 & 1.47 & 1.52 \\
$p=64$  : rq & 3.64   & 1.19 & 1.20 & 1.19 & 1.20 & 1.24 \\ \hline
$p=256$ : qs & 2.75   & 1.26 & 1.26 & 1.27 & 1.34 & 1.47 \\
$p=256$ : hs & 6.07   & 1.50 & 1.50 & 1.53 & 1.76 & 1.84 \\
$p=256$ : rq & 3.64   & 1.34 & 1.34 & 1.37 & 1.43 & 1.67 \\ \hline
\end{tabular}}
\caption{\mrsort : 8threads, varying sample parameters $a$ and $p$  for $n=8192000$
\label{Table6}}
\end{table}

\begin{table}[h!b!p!]\centering
{\begin{tabular}{|l r|r|r|r|r|r|}\hline
\multicolumn{2}{|c|}{} &
\multicolumn{5}{c|}{\dsort}       \\ \hline
\multicolumn{2}{|c|}{Problem size : \sort (sec)} &
\multicolumn{1}{c|}{$p=4$}         &
\multicolumn{1}{c|}{$p=32$}        &
\multicolumn{1}{c|}{$p=64$}        &
\multicolumn{1}{c|}{$p=128$}       &
\multicolumn{1}{c|}{$p=256$}       \\ \hline
$n=1024000$  : qs & 0.30  & 0.30 & 0.29 & 0.26 & 0.29 & 0.31 \\
$n=1024000$  : hs & 0.55  & 0.46 & 0.42 & 0.38 & 0.41 & 0.42 \\
$n=1024000$  : rq & 0.37  & 0.36 & 0.34 & 0.34 & 0.34 & 0.35 \\ \hline
$n=4096000$  : qs & 1.31  & 1.34 & 1.31 & 1.31 & 1.29 & 1.30 \\
$n=4096000$  : hs & 2.78  & 2.39 & 1.92 & 1.89 & 1.84 & 1.81 \\
$n=4096000$  : rq & 1.66  & 1.61 & 1.54 & 1.54 & 1.52 & 1.53 \\ \hline
$n=8192000$  : qs & 2.74  & 2.81 & 3.07 & 3.01 & 2.72 & 2.71 \\
$n=8192000$  : hs & 6.11  & 5.30 & 4.10 & 4.00 & 3.88 & 3.80 \\
$n=8192000$  : rq & 3.48  & 3.31 & 3.27 & 3.22 & 3.20 & 3.17 \\ \hline
$n=16384000$ : qs & 5.75  & 5.87 & 5.92 & 5.78 & 5.73 & 6.25 \\
$n=16384000$ : hs &13.25  &11.65 & 9.21 & 8.56 & 8.24 & 8.08 \\
$n=16384000$ : rq & 6.81  & 6.99 & 6.94 & 6.73 & 6.68 & 6.52 \\ \hline
$n=32768000$ : qs &12.04  &12.27 &12.27 &12.23 &12.14 &11.97 \\
$n=32768000$ : hs &29.14  &26.11 &20.58 &18.99 &17.49 &17.00 \\
$n=32768000$ : rq &14.41  &14.80 &14.45 &14.34 &14.07 &13.90 \\ \hline
\end{tabular}}
\caption{\dsort for $r=1$ : varying problem size $n$ and splitter size $p-1$
\label{Table7}}
\end{table}

\begin{table}[h!b!p!]\centering
{\begin{tabular}{|l r|r|r|r|r|r|r|r|r|r|}\hline
\multicolumn{2}{|c|}{} &
\multicolumn{5}{c|}{\dsort}       \\ \hline
\multicolumn{2}{|c|}{Splitter size ($p-1$) : \sort (sec)} &
\multicolumn{1}{c|}{$r=1  $}       &
\multicolumn{1}{c|}{$r=2  $}       &
\multicolumn{1}{c|}{$r=3  $}       &
\multicolumn{1}{c|}{$r=4  $}       &
\multicolumn{1}{c|}{$r=5  $}       \\ \hline
$p=64$  : qs & 2.74   & 2.74 & 2.74 & 2.73 & 2.73 & 2.74 \\
$p=64$  : hs & 6.11   & 3.99 & 4.00 & 3.99 & 4.00 & 4.00 \\
$p=64$  : rq & 3.48   & 3.22 & 3.22 & 3.22 & 3.22 & 3.23 \\ \hline
$p=128$ : qs & 2.74   & 2.72 & 2.72 & 2.72 & 2.72 & 2.73 \\
$p=128$ : hs & 6.11   & 3.92 & 3.88 & 3.88 & 3.88 & 3.89 \\
$p=128$ : rq & 3.48   & 3.20 & 3.20 & 3.20 & 3.20 & 3.27 \\ \hline
$p=256$ : qs & 2.74   & 2.74 & 2.72 & 2.74 & 2.74 & 2.75 \\
$p=256$ : hs & 6.11   & 3.84 & 3.81 & 3.83 & 3.84 & 3.85 \\
$p=256$ : rq & 3.48   & 3.20 & 3.18 & 3.19 & 3.20 & 3.21 \\ \hline
\end{tabular}}
\caption{\dsort :  varying sample parameters $r$ and $p$  for $n=8192000$
\label{Table8}}
\end{table}
\end{document}